\newtheorem{theorem}{Theorem}[section]
\newtheorem{corollary}[theorem]{Corollary}
\newtheorem{proposition}[theorem]{Proposition}
\newtheorem{lemma}[theorem]{Lemma}
\begin{document}
\markboth{ }{}
\title{\bf Equitable Colorings of $l$-Corona Products \\of Cubic Graphs\footnote{Project has been 
partially supported by Narodowe Centrum Nauki under contract 
DEC-2011/02/A/ST6/00201}}
\date{}
\author{Hanna Furma\'nczyk\footnote{Institute of Informatics,\ University of Gda\'nsk,\ Wita Stwosza 57, \ 80-308 Gda\'nsk, \ Poland. \ e-mail: hanna@inf.ug.edu.pl},  
\ Marek Kubale \footnote{Department of Algorithms and System Modelling,\ Gda\'nsk University of Technology,\ Narutowicza 11/12, \ 80-233 Gda\'nsk, \ Poland. \ e-mail: kubale@eti.pg.gda.pl},
}

\markboth{H. Furma\'nczyk, M. Kubale}{Equitable Colorings of $l$-Corona Products of Cubic Graphs}

\maketitle

\begin{abstract}
A graph $G$ is equitably $k$-colorable if its vertices can be partitioned into $k$ independent sets in such a way that the number of vertices in any two sets differ by at most one. The 
smallest integer $k$ for which such a coloring exists is known as the \emph{equitable chromatic number} of $G$ and it is denoted by $\chi_{=}(G)$. 

In this paper the problem of determinig the value of equitable chromatic number for multicoronas of cubic graphs $G \circ^l H$ is studied. The problem of ordinary coloring of multicoronas of 
cubic graphs is solvable in polynomial time. 
The complexity of equitable coloring problem is an open question for these graphs. We provide some polynomially solvable cases of cubical multicoronas and give simple linear time 
algorithms for equitable coloring of such graphs which use at most $\chi_=(G \circ ^l H) + 1$ colors in the remaining cases. 
\end{abstract}

{\bf Keywords:} {corona graph, $l$-corona products, cubic graph, equitable chromatic number, polynomial algorithm, 1-absolute aproximation algorithm.}

\section{Introduction}\label{intro}
All graphs considered in this paper are connected, finite and simple, i.e. undirected, loopless and without multiple edges.

The paper concerns one of popular graph coloring models, namely equitable coloring. If the set of vertices of a graph $G$ can be partitioned into $k$ (possibly empty) classes 
$V_1, V_2, \ldots,V_k$ such that each $V_i$ is an independent set and the condition $||V_i|-|V_j||\leq 1$ holds for every pair ($i, j$), then $G$ is said to be {\it equitably k-colorable}. 
The smallest integer $k$ for which $G$ is equitably $k$-colorable is known as the {\it equitable chromatic number} of $G$ and denoted by $\chi_{=}(G)$ \cite{meyer}. When the condition 
$||V_i|-|V_j||=0$ holds for every pair ($i, j$), graph $G$ is said to be \emph{strong equitably $k$-colorable}. Given a $k$-coloring of $G$, a vertex with color $i$ is called an 
$i$-\emph{vertex}.

It is interesting to note that if a graph $G$ is equitably $k$-colorable,
it does not imply that it is equitably $(k + 1)$-colorable. A counterexample is the complete bipartite graph (also cubic graph) $K_{3,3}$, which can be equitably colored with two colors, but not with three. The smallest integer $k$, for which $G$ is equitably $k'$-colorable for all $k' \geq k$, is called the \emph{equitable chromatic threshold} of $G$ and denoted by $\chi^*_=(G)$.

We use also the concept of semi-equitable coloring. Graph $G$ has a \emph{semi-equitable $k$-coloring},
if there exists a partition of its vertices into independent sets $V_1, \ldots , V_k \subset V$ such that
one of these subsets, say $V_i$, is of size $\not\in \{\lfloor n/k \rfloor, \lceil n/k \rceil \}$, and the remaining subgraph $G - V_i$ is equitably $(k - 1)$-colorable. Note that not all
graphs have such a coloring, for example $K_4$ does not have.
In the following we will say that graph $G$ has $(V_1, \ldots, V_k)$-coloring to express explicitly a partition of $V$ into $k$ independent sets.
If, however, only cardinalities of color classes are important, we will use the notation $[|V_1|, \ldots, |V_k|]$.

The model of equitable graph coloring has many practical applications. Every time when we have to divide a system with binary conflict relations into equal or almost equal conflict-free 
subsystems we can model this situation by means of equitable graph coloring. Furma\'nczyk \cite{furm} mentions a specific application of this type of scheduling problem, namely, assigning 
university courses to time slots in a way that avoids scheduling incompatible courses at the same time and spreads the courses evenly among the available time slots. Also, the application of 
equitable coloring in scheduling of jobs on uniform machines was considered in \cite{3masz} and \cite{4masz}.

In this paper we consider the problem of equitable vertex-coloring for one of known graph products, namely for corona products of cubic graphs. Graph products are interesting and 
useful in many situations. The complexity of many problems, also equitable coloring, that deal with very large and complicated graphs is reduced greatly if one is able to fully characterize 
the properties of less complicated prime factors. Moreover, corona graphs lie often close to the boundary between easy and hard coloring problems \cite{harder}. More formally, 
the \emph{corona} of two graphs, $n_G$-vertex graph $G$ and 
$n_H$-vertex graph $H$, is a graph $G \circ H$ formed from one copy of $G$,  called the \emph{center graph}, and $n_G$ copies of $H$, named the 
\emph{outer graph}, where the $i$-th vertex of $G$ is adjacent to every vertex in the $i$-th copy of $H$. Such type of graph product was introduced by Frucht and Harary \cite{frucht}. In this 
paper we extend this concept to $l$-corona products as follows. For any integer $l \geq 2$, we define the graph $G \circ ^l H$ as 
$G \circ ^l H = (G \circ ^{l-1} H ) \circ H$, where $G \circ ^1 H =G \circ H$. Graph $G \circ ^l H$ is also named as \emph{$l$-corona product} of $G$ and $H$. 

The problem of equitable 
coloring of corona products of cubic graphs was considered in \cite{harder}. The authors showed that although the problem of ordinary coloring of coronas of cubic graphs is solvable in polynomial time, the problem of equitable coloring becomes 
NP-hard for such graphs. Moreover, they provided polynomially solvable instances of cubical coronas in some cases and 1-absolute approximation algorithms in the remaining cases.
In this paper we extend the results from \cite{harder} for cubical multicoronas. 
%Note also that the concept of equitable coloring of multicoronas was first considred in \cite{multivahan}. 

Now, let us recall some facts concerning cubic graphs. In 1994, Chen et al. \cite{clcub} proved that for every connected cubic graph, the chromatic number of which is 3, 
the equitable chromatic number of it is also equal to 3. Moreover, since connected cubic graph $G$, for which $\chi(G)=2$ is a bipartite graph $G(A,B)$ such that $|A|=|B|$, we have:
$$\chi(G)=\chi_=(G)$$ 
and due to Brooks Theorem \cite{brooks}:
$$2 \leq \chi_=(G) \leq 4,$$
for any cubic graph $G$.

Let
\begin{itemize}
\item $Q_2$ denote the class of $2$-chromatic cubic graphs,
\item $Q_3$ denote the class of $3$-chromatic cubic graphs,
\item $Q_4$ denote the class of $4$-chromatic cubic graphs.

Clearly, $Q_4=\{K_4\}$.
\end{itemize}

Next, let $Q_2(t) \subset Q_2$ ($Q_3(t) \subset Q_3$) denote the class of 2-chromatic (3-chromatic) cubic graphs with partition sets of cardinality $t$, and let $Q_3(u,v,w) \subset Q_3$ denote the class of 
3-chromatic graphs with color classes of cardinalities $u$, $v$ and $w$, respectively, where $u \geq v \geq w \geq u-1$.

Hajnal and Szemeredi \cite{hfs:haj} proved
\begin{theorem}
If $G$ is a graph satisfying $\Delta(G) \leq k$ then $G$ has an equitable $(k + 1)$-coloring. 
\end{theorem}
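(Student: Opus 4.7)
The plan is to proceed by induction on $n=|V(G)|$, using a local rebalancing argument. Write $r=k+1$. By padding $G$ with at most $r-1$ isolated vertices and restricting the resulting coloring, we may assume $n=rs$ for some integer $s\geq 0$; the base case $s\leq 1$ is trivial since a proper $r$-coloring exists greedily (as $\Delta(G)\leq k$) and each class has at most one vertex.

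For the inductive step, delete any vertex $v$ to obtain $G'=G-v$ and apply the inductive hypothesis: $G'$ admits a proper $r$-coloring $U_1,\dots,U_r$ with $|U_i|=s$ for $i<r$ and $|U_r|=s-1$. If $v$ has no neighbor in $U_r$, inserting $v$ there finishes the argument. Otherwise, since $v$ has at most $k=r-1$ neighbors, it misses at least one class $U_i$ with $i<r$; placing $v$ into that $U_i$ yields a \emph{nearly equitable} coloring in which every class has size $s$ except one class of size $s+1$ and one of size $s-1$.

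The main lemma, and the heart of the proof, is that any nearly equitable $r$-coloring of a graph with $\Delta\leq k$ can be transformed into a fully equitable one. I would introduce an auxiliary digraph $D$ on the color classes with an arc $U_j\to U_i$ whenever some $w\in U_j$ has no neighbor in $U_i$; each such arc encodes a legal single-vertex move that preserves properness. A directed path in $D$ from the oversized class to the undersized class then yields a cascade of moves that equalizes all class sizes at once, producing the desired equitable coloring.

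The main obstacle is showing that such a path always exists. Supposing it does not, let $A$ be the set of classes reachable from the oversized class in $D$. Then the undersized class lies outside $A$, and every vertex in $\bigcup_{U\in A}U$ must have at least one neighbor in every class outside $A$. Double-counting the edges between $A$ and its complement against the degree bound $\Delta\leq k$ yields an immediate contradiction in the extremal case $|A|=r-1$; for smaller $|A|$ a more delicate argument is required, identifying a vertex of the oversized class together with a ``witness'' neighbor in some class of the complement, deleting the pair, and recursively invoking the inductive hypothesis on the resulting strictly smaller graph. Verifying that this reduced graph still satisfies $\Delta\leq k$ and inherits a compatible nearly equitable coloring---so that the recursion terminates---is the step I expect to require the most care.
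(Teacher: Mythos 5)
Before assessing your argument, note what you are up against: the paper does not prove this statement at all. It is the Hajnal--Szemer\'edi theorem, imported verbatim from \cite{hfs:haj} and used as a black box, so there is no internal proof to compare with. Your outline follows the now-standard strategy for this theorem (essentially Kierstead and Kostochka's ``short proof''): reduce to the case $r\mid n$, delete and re-insert a vertex to get a \emph{nearly equitable} colouring, and repair it via an augmenting path in the auxiliary digraph of feasible single-vertex moves. The skeleton is the right one, but two things are wrong or missing.

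The decisive gap is the case in which no directed path from the oversized class to the undersized class exists. Your edge count disposes only of the extremal case: with $b=r-|A|$ classes outside $A$, the inequality $(|A|s+1)\,b\le (bs-1)(r-1)$ is contradictory for $b=1$ but perfectly satisfiable for $b\ge 2$, and that non-extremal case is where the entire difficulty of the theorem lives. ``Identify a witness neighbour, delete the pair, and recurse'' is a declaration of intent, not an argument: you do not say what smaller instance the induction is applied to, why its colouring lifts back to one of $G$ compatible with the classes in $A$, or why the process terminates. The published short proof needs a carefully structured secondary induction (movable vertices, terminal classes, solo vertices, and a non-obvious choice of what to delete), none of which is recoverable from your sketch. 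Separately, and more easily fixed: padding with isolated vertices is not a valid reduction. An equitable $r$-colouring of $G$ plus $t$ isolated vertices may place several of them in one class, and its restriction to $G$ then has class sizes differing by more than one (take $G$ edgeless on $4$ vertices and $r=3$: classes $\{v_1,v_2\},\{v_3,v_4\},\{i_1,i_2\}$ restrict to sizes $2,2,0$). Pad instead with a disjoint clique $K_t$, $t\le r$; its vertices are forced into $t$ distinct classes, so deleting them leaves an equitable colouring of $G$, and the maximum degree bound is preserved.
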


This theorem implies that every subcubic graph has an equitable $k$-coloring for every $k \geq 4$. In other words,
\begin{equation}
\chi_=^*(G) \leq 4.\label{prog}
\end{equation}

This result was extended in \cite{sharp} into a semi-equitable coloring of cubic graphs.
\begin{theorem}[\cite{sharp}]
Given an $n$-vertex subcubic graph $G$, $K_4 \neq G \neq K_{3,3}$, a constant $k \geq 4$, and an integer function $s=s(n)$. There is a semi-equitable $k$-coloring of $G$ of type 
$[s,\lceil \frac{n-s}{k-1} \rceil, \ldots,\lfloor \frac{n-s}{k-1}\rfloor]$, if $s \leq \lceil n/3 \rceil$.
\label{semi}
\end{theorem}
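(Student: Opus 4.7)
The plan is to peel off a single independent set $V_1$ of cardinality exactly $s$ and then equitably $(k-1)$-color the remaining $(n-s)$-vertex subcubic graph $G-V_1$. Since $G$ is connected, subcubic, and $G\neq K_4$, Brooks' theorem yields a proper $3$-coloring of $G$; let its color classes be $A$, $B$, $C$ with $|A|\geq|B|\geq|C|$, so that $|A|\geq\lceil n/3\rceil\geq s$. I would then pick any $V_1\subseteq A$ with $|V_1|=s$; this is an independent set of the required size, and will play the role of the oversized class of the semi-equitable partition.

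When $k\geq 5$, the remaining task is to equitably $(k-1)$-color the subcubic graph $G-V_1$ using at least $k-1\geq 4$ colors. This is immediate from~(\ref{prog}) (equivalently, Hajnal--Szemer\'edi applied to $G-V_1$): since $G-V_1$ is subcubic, it has $\chi^*_=\leq 4\leq k-1$, hence admits an equitable $(k-1)$-coloring whose class sizes lie in $\{\lfloor(n-s)/(k-1)\rfloor,\lceil(n-s)/(k-1)\rceil\}$. Concatenating with $V_1$ gives the desired semi-equitable $k$-coloring.

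The remaining case $k=4$ is the technical core of the proof: one must equitably $3$-color $G-V_1$ into three classes of sizes $\lceil(n-s)/3\rceil$ or $\lfloor(n-s)/3\rfloor$. The restricted coloring $(A\setminus V_1,B,C)$ of $G-V_1$ is proper but in general unbalanced, so I would rebalance by Kempe-chain exchanges inside the three bichromatic subgraphs $G[(A\setminus V_1)\cup B]$, $G[(A\setminus V_1)\cup C]$, and $G[B\cup C]$. One first notes that $G-V_1$ contains neither $K_4$ nor $K_{3,3}$ as a connected component (both are $3$-regular, so such a component of $G-V_1$ would already be a connected component of $G$, contradicting either $\chi(G)\leq 3$ or $G\neq K_{3,3}$); with this in hand, together with the freedom to choose which $s$ vertices of $A$ are put into $V_1$, the Kempe exchanges can be driven to the prescribed class sizes.

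The main obstacle is precisely this $k=4$ step. Hajnal--Szemer\'edi is not strong enough here, since it would only produce an equitable $4$-coloring of $G-V_1$; and an arbitrary subcubic graph need not admit an equitable $3$-coloring, as $K_{3,3}$ itself demonstrates. Showing that the rebalancing always terminates with exactly the partition type $[\,\lceil(n-s)/3\rceil,\ldots,\lfloor(n-s)/3\rfloor\,]$, using both exclusions $G\neq K_4$ and $G\neq K_{3,3}$ as well as the one-parameter freedom in selecting $V_1\subseteq A$, is the substantive part of the argument; the cases $k\geq 5$ are, by contrast, essentially immediate from the theorems already cited.
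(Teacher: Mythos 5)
First, a point of reference: the paper does not prove Theorem \ref{semi} at all --- it is quoted from \cite{sharp} as an external result --- so there is no internal proof to compare you against, and I can only judge your proposal on its own merits. Your overall decomposition (put $V_1$ inside the largest class $A$ of a proper $3$-coloring of $G$, which exists since $G$ is connected, subcubic and $G\neq K_4$, so that $|A|\geq\lceil n/3\rceil\geq s$; then equitably $(k-1)$-color $G-V_1$) is sound, and the case $k\geq 5$ is complete: Hajnal--Szemer\'edi applied to the subcubic graph $G-V_1$ gives an equitable $(k-1)$-coloring because $k-1\geq 4$, and the class sizes are exactly $\lceil(n-s)/(k-1)\rceil$ or $\lfloor(n-s)/(k-1)\rfloor$ as required.

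The genuine gap is precisely where you place it: for $k=4$ you assert that Kempe-chain exchanges inside the three bichromatic subgraphs ``can be driven to the prescribed class sizes,'' but you give no invariant, no termination argument, and no indication of how the freedom in choosing $V_1\subseteq A$ is actually used; as written this is a statement of intent, not a proof. The gap is, however, closable without any Kempe-chain machinery, using a result this paper already cites: Chen, Lih and Wu \cite{clcub} proved that every connected graph of maximum degree at most $3$ other than $K_4$ and $K_{3,3}$ is equitably $3$-colorable. Your own observation disposes of the exceptional components --- a $3$-regular component of $G-V_1$ has no neighbours in $V_1$, hence would be a whole component of the connected graph $G$, forcing $G=K_4$ or $G=K_{3,3}$, both excluded --- so every component of $G-V_1$ is equitably $3$-colorable. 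One must still combine the components: permute the three colors within each component greedily, always routing each component's surplus classes to the currently smallest global classes; this keeps the three global class sizes within one of each other and yields the type $[\lceil(n-s)/3\rceil,\ldots,\lfloor(n-s)/3\rfloor]$. With that substitution your argument becomes a complete proof; without it, the $k=4$ case --- which you correctly identify as the substantive part --- remains unproved.
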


The remainder of the paper is organized as follows. In Section \ref{aux} we provide some auxilary tools while in Section \ref{main} we give our main results. Namely, we give in some cases 
polynomial algorithms for optimal equitable coloring of cubical coronas $G \circ^l H$, $l \geq 1$, while in the remaining 
cases we give sharp bounds on the equitable chromatic number of $l$-corona products of such
graphs.
Section \ref{sum} summarizes our results in a tabular form and remains as an open question the complexity status of equitable coloring of graphs under consideration.

\section{Some auxilaries} \label{aux}

In this section we prove lemmas, which are used very often in the further part of the paper.
 
\begin{lemma}
Let $G$, $H$ be cubic graphs and $l \geq 1$. If $G$ has a strong equitable $k$-coloring, then so does $G \circ^l H$ for any $k \geq 5$.
\label{lma}
\end{lemma}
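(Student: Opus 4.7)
The plan is to prove the lemma by induction on $l$, via the following stronger inductive claim: for any graph $X$ with a strong equitable $k$-coloring and any cubic $H$, the corona $X\circ H$ also has a strong equitable $k$-coloring whenever $k\geq 5$. Applying this claim $l$ times, with $X=G,\,G\circ H,\,\ldots,\,G\circ^{l-1}H$, yields the lemma. Note that the inductive claim does not require $X$ to be cubic, which is essential because $G\circ^j H$ is not cubic for $j\geq 1$.

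For the inductive step, I would fix a strong equitable $k$-coloring $V_1,\ldots,V_k$ of $X$ with $|V_i|=M$ and, for each $v\in V_i$, write $H_v$ for the attached copy of $H$. Since $v$ is adjacent to every vertex of $H_v$, the coloring of $H_v$ must avoid color $i$, leaving $k-1\geq 4$ available colors. Because $H$ is subcubic, the Hajnal--Szemer\'edi theorem (equivalently, \eqref{prog}) provides an equitable $(k-1)$-coloring of $H$; writing $n_H=(k-1)q+r$ with $0\leq r<k-1$, exactly $r$ classes in such a coloring have size $q+1$ (the \emph{large} classes) and the remaining $k-1-r$ have size $q$. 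A straightforward count then shows that combining the given coloring of $X$ with one $(k-1)$-coloring per copy produces a strong equitable $k$-coloring of $X\circ H$ (each class of size $M(1+n_H)$) if and only if, setting $L_{i,c}$ to be the number of $v\in V_i$ for which $H_v$ places color $c$ on a large class, one has $L_{i,i}=0$, $0\leq L_{i,c}\leq M$, and $\sum_{c\neq i}L_{i,c}=\sum_{i\neq c}L_{i,c}=Mr$ for all $i,c$.

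The main obstacle, though routine, is exhibiting such an integer matrix $L$ and realizing it via actual colorings. For the matrix itself I would write $Mr=(k-1)a+b$ with $0\leq b<k-1$ and take $L=a(J-I)+\sum_{j=1}^{b}P_{\sigma^j}$, where $\sigma=(1\;2\;\cdots\;k)$ is a $k$-cycle and $P_{\sigma^j}$ denotes its permutation matrix; the powers $\sigma^j$ are derangements (so the diagonal is zero), every row and column receives $a(k-1)+b=Mr$, and because $r<k-1$ forces $a<M$, the entries lie in $\{a,a+1\}\subseteq[0,M]$. Given the $i$-th row of $L$, the $M$ required $r$-subsets of "large colors" (one per copy in $V_i$) exist by a standard bipartite degree-realization argument (a bipartite graph with $M$ vertices of degree $r$ on one side and $k-1$ vertices of degrees $L_{i,c}$ on the other). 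We then color each $H_v$ with $v\in V_i$ by taking any equitable $(k-1)$-coloring of $H$ and relabelling its colors so that the prescribed large colors occupy the size-$(q+1)$ classes. Combining with the coloring of $X$ gives the required strong equitable $k$-coloring of $X\circ H$, closing the induction.
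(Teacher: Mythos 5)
Your proof is correct, and its skeleton --- induction on $l$ via the claim that $X\circ H$ inherits a strong equitable $k$-coloring from $X$ whenever $H$ is subcubic and $k\geq 5$, each copy $H_v$ being colored with the $k-1$ colors other than that of $v$ --- is the same as the paper's. Where you diverge is in how the balance is enforced. The paper fixes one $(k-1)$-partition $H(X_1,\ldots,X_{k-1})$ (any proper $(k-1)$-coloring, guaranteed by (\ref{prog}); it need not be equitable) and colors the copy attached to an $i$-vertex by giving $X_j$ the color $(i+j)\bmod k$. Since, for fixed $c$, the index $(c-i)\bmod k$ runs over all of $\{1,\ldots,k-1\}$ as $i$ runs over the colors different from $c$, every color class of $X\circ H$ receives each part $X_j$ exactly $|V_i|=M$ times, so each class has size $M(n_H+1)$ with no further bookkeeping. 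You instead insist on an equitable $(k-1)$-coloring of $H$ and then solve a balancing problem --- a zero-diagonal nonnegative integer matrix with constant row and column sums $Mr$, realized rowwise by a bipartite degree-sequence argument --- to decide which colors receive the large classes in each copy. Both arguments are sound; yours is more flexible in that it shows an arbitrary equitable $(k-1)$-coloring of $H$ can always be distributed so as to balance, but the paper's cyclic shift buys the same conclusion with essentially no combinatorial work and without even needing the $(k-1)$-coloring of $H$ to be equitable. Your explicit remark that the center graph in the inductive step need not be cubic (only $H$ must be) is a point the paper leaves implicit but does rely on.
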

\begin{proof}
First, notice that every cubic graph $H$ can be seen as a $(k-1)$-partite graph $H(X_1,X_2,\ldots, X_{k-1})$, $k \geq 5$ (due to inequality (\ref{prog})). Next, let 
$V(G)=V_1 \cup V_2 \cup \cdots \cup V_k$, where $V_1, \ldots, V_k$ are independent sets each of size $n_G/k$ (due to strong equitable $k$-coloring of $G$). 

Now, we determine an equitable $k$-coloring of $G \circ^1 H$, starting from the strong equitable $k$-coloring of $G$: $c:V(G) \rightarrow \{1,2, \ldots, k\}$. 
We extend it to the copies of $H$ in $G \circ H$ in the following way:
\begin{itemize}
\item color vertices of each copy of $H$ linked to an $i$-vertex of $G$ using color $(i+j) \bmod k$ for vertices in $X_j$ (we use color $k$ instead of 0), for $i=1,\ldots,k$ and $j=1,\ldots,k-1$.
\end{itemize}

Let us notice that this $k$-coloring of $G \circ^1 H$ is strong equitable. Indeed, every color is used $n_G/k+n_G/k(|X_1|+|X_2|+\cdots +|X_{k-1}|)=n_G(n_H+1)/k$ times. 
The thesis follows from  induction on $l$. 
\end{proof}

\begin{lemma}
Let $G$, $H$ be cubic graphs, $H \in Q_2 \cup Q_3$, and $l \geq 1$. If $G$ has a strong equitable $4$-coloring, then so does $G \circ^l H$.
\label{lma4}
\end{lemma}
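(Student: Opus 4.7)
My plan is to mimic the extension strategy of Lemma \ref{lma}. I would start with the given strong equitable 4-coloring $c: V(G) \to \{1,2,3,4\}$ of $G$, so every class $V_i = c^{-1}(i)$ has size $n_G/4$, and extend it to $G \circ^1 H$ by coloring each copy of $H$ attached to an $i$-vertex using only colors from $\{1,2,3,4\} \setminus \{i\}$. The target class size in $G \circ H$ is $n_G(n_H+1)/4$, so the $H$-copies must contribute exactly $n_G n_H/4$ vertices of each color $j$. The inductive step for $l \ge 2$ will then be automatic: the construction uses only that the center factor admits a strong equitable 4-coloring, not its cubicness, so it applies verbatim to $(G \circ^{l-1} H) \circ H$.

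For $H \in Q_3$, I would invoke the result of Chen, Lih and Wu to obtain an equitable 3-partition $(X_1, X_2, X_3)$ of $H$ with sizes $a_1, a_2, a_3$ summing to $n_H$. The balancing step reduces to exhibiting a $4 \times 4$ matrix with zeros on the diagonal and off-diagonal entries from $\{a_1, a_2, a_3\}$ such that every row and every column contains each $a_t$ exactly once; the cyclic choice
\[
M = \begin{pmatrix} 0 & a_1 & a_2 & a_3 \\ a_3 & 0 & a_1 & a_2 \\ a_2 & a_3 & 0 & a_1 \\ a_1 & a_2 & a_3 & 0 \end{pmatrix}
\]
does the job. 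In each copy attached to a $V_i$-vertex, I would assign color $j$ to the part of size $M_{ij}$; this yields a proper coloring (color $i$ is avoided) whose count of $j$-vertices equals $n_G/4 + (n_G/4)\sum_{i \ne j} M_{ij} = n_G(n_H+1)/4$, because every column of $M$ sums to $n_H$.

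For $H \in Q_2$, let $(Y_1, Y_2)$ be the equal bipartition of $H$; two colors per copy suffice. I would fix a fixed-point-free permutation $\sigma$ of $\{1,2,3,4\}$, for instance the 4-cycle, and color every copy attached to a $V_i$-vertex with the two colors of $\{1,2,3,4\} \setminus \{i, \sigma(i)\}$. For each $j$, exactly two indices $i \ne j$ satisfy $\sigma(i) \ne j$, so color $j$ is used in $2 \cdot (n_G/4)$ copies, each contributing $n_H/2$ vertices of color $j$, which gives the required $n_G n_H/4$ extra $j$-vertices.

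The main obstacle I anticipate is just the combinatorial insight that the $Q_3$ balancing reduces to a Latin-square-with-zero-diagonal pattern of the form $M$ above; the $Q_2$ case is easier because only a derangement is required. Once the cyclic construction is in hand the rest is bookkeeping, and no divisibility issues arise, since only $n_G/4$ — an integer by hypothesis — enters the counts.
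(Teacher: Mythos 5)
Your proposal is correct and is essentially the paper's own argument: your Latin-square matrix $M$ is exactly the paper's cyclic assignment of color $(i+j)\bmod 4$ to part $X_j$ in copies attached to $i$-vertices, and the induction on $l$ is handled the same way. The only cosmetic difference is in the $Q_2$ case, where you use a derangement and rely on the bipartition of a cubic bipartite graph being balanced (which it is, by edge counting), whereas the paper reuses the cyclic shift so that each color receives one part of each type and no balance assumption is needed.
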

\begin{proof}
Let $V(G)=V_1 \cup V_2 \cup V_3 \cup V_4$, where $V_1, \ldots, V_4$ are independent sets of size $n_G/4$ each.
Now, we determine an equitable $4$-coloring of $G \circ^1 H$, starting from the strong equitable $4$-coloring of $G$: $c:V(G) \rightarrow \{1,2, 3, 4\}$. 
We extend it to the copies of $H$ in $G \circ H$ in the following way:
\begin{description}
\item[\textnormal{\emph{Case} 1:}] $H \in Q_2$ 

Let $H=H(X_1,X_2)$. We color the vertices of each copy of $H$ linked to an $i$-vertex of $G$ using color $(i+j) \bmod 4$ for vertices in $X_j$ (we use color 4 instead of 0), for $i=1,\ldots,4$ 
and $j=1,2$.

\item[\textnormal{\emph{Case} 2:}] $H \in Q_3$ 

Let $H=H(X_1,X_2,X_3)$. We color vertices of each copy of $H$ linked to an $i$-vertex of $G$ using color $(i+j) \bmod 4$ for vertices in $X_j$ (we use color 4 instead of 0), for $i=1,\ldots,4$ 
and $j=1,2,3$.
\end{description}

Notice that the $4$-coloring of $G \circ^1 H$ is strong equitable. Indeed, every color is used $n_G(n_H+1)/4$ times. 
The thesis follows from induction on $l$. 
\end{proof}

Actually, we have proved 

\begin{corollary}
Let $G$ and $H \neq K_4$ be cubic graphs and $l \geq 1$. If $G$ has a strong equitable $k$-coloring, then 
$$\chi_=(G \circ ^l H) \leq k,$$
 for any $k \geq 4$.\label{lma_cor}
\end{corollary}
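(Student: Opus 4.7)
The corollary is essentially a consolidation of Lemmas \ref{lma} and \ref{lma4}, so the plan is to simply combine them after observing that the hypothesis $H \neq K_4$ pins down the chromatic class of $H$. The first thing I would record is that since $H$ is cubic and $H \neq K_4$, the classification $Q_4 = \{K_4\}$ forces $H \in Q_2 \cup Q_3$. This is the only nontrivial structural input needed before the case split.

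Next, I would split on $k$. For $k \geq 5$, Lemma \ref{lma} applies verbatim to $G$ and $H$: starting from a strong equitable $k$-coloring of $G$, it produces a strong equitable $k$-coloring of $G \circ^l H$, which in particular yields $\chi_=(G \circ^l H) \leq k$. For the remaining case $k = 4$, the hypothesis $H \in Q_2 \cup Q_3$ obtained above is exactly what Lemma \ref{lma4} requires, so Lemma \ref{lma4} produces a strong equitable $4$-coloring of $G \circ^l H$, giving $\chi_=(G \circ^l H) \leq 4$.

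There is no real obstacle here; the only thing to be careful about is making sure the $k=4$ case is not overlooked, since Lemma \ref{lma} is stated only for $k \geq 5$. Once one notices that excluding $H = K_4$ is precisely what lets Lemma \ref{lma4} cover the boundary value $k = 4$, the two lemmas together cover all $k \geq 4$ and the corollary follows immediately. A one-line proof of the form ``Combine Lemmas \ref{lma} and \ref{lma4}, using $H \in Q_2 \cup Q_3$ in the case $k = 4$'' should suffice.
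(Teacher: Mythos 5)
Your proof is correct and matches the paper's intent exactly: the corollary is stated immediately after Lemmas \ref{lma} and \ref{lma4} with the remark ``Actually, we have proved,'' i.e.\ it is precisely the combination of Lemma \ref{lma} for $k \geq 5$ and Lemma \ref{lma4} for $k = 4$, with $H \neq K_4$ guaranteeing $H \in Q_2 \cup Q_3$ in the latter case. Nothing to add.
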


\section{Main results}\label{main}
\subsection{Case $H \in Q_2$}
In this subsection we obtain exact values of $\chi_=(G \circ ^l H)$, where $H \in Q_2$. We give also polynomial-time algorithms for the corresponding colorings.

First, let us recall a known result.
\begin{proposition}[\cite{harder}]
If $G$ is any cubic graph and $H \in Q_2$, then
$$\chi_=(G \circ H)=
\left \{
\begin{array}{ll}
3 & if \ G \neq K_{3,3} \ and \ 6|n_G, \\
4 & otherwise.
\end{array}
\right.
$$\label{3-4} 
\end{proposition}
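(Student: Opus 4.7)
The plan is to establish the lower bound together with the construction for equitable $3$-coloring by a forced-structure argument, and then handle the upper bound of $4$ in the remaining cases by extending an equitable $4$-coloring of $G$.

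For the lower bound, each vertex $v$ of the center $G$ is adjacent in $G\circ H$ to all $n_H$ vertices of its attached copy $H_v$, so $\chi(G\circ H)\geq \chi(H)+1=3$. Moreover, in any proper $3$-coloring of $G\circ H$ the copy $H_v$ must be $2$-coloured from the palette $\{1,2,3\}\setminus\{c(v)\}$; since $H_v\in Q_2$ is connected bipartite with parts $X_1,X_2$ of equal size $n_H/2$ (by regularity), this $2$-coloring forces each part to be monochromatic. Setting $a_i$ to be the number of $i$-coloured vertices in $G$ and $m=n_H/2\geq 3$, the total count of colour $i$ in $G\circ H$ is $a_i+(n_G-a_i)m$; equity within $1$ then yields $(m-1)|a_i-a_j|\leq 1$, forcing $a_1=a_2=a_3=n_G/3$. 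This requires $3\mid n_G$, so (with $n_G$ even) $6\mid n_G$. In the exceptional case $G=K_{3,3}$, the only maximal independent sets are the two bipartition classes of size $3$, so $K_{3,3}$ admits no partition into three independent $2$-sets; hence no equitable $3$-coloring of $K_{3,3}\circ H$ exists.

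For the construction when $G\neq K_{3,3}$ and $6\mid n_G$, I would first produce a strong equitable $3$-coloring of $G$: if $G\in Q_3$, this follows from the Chen et al.\ theorem together with $3\mid n_G$; if $G\in Q_2\setminus\{K_{3,3}\}$, I would verify that such $G$ admits a ``transversal'' independent set of size $n_G/3$ meeting both bipartition sides evenly, so that the three color classes can be taken of size $n_G/3$ each (this is the main technical sub-step, handled by a direct construction exploiting $G\neq K_{3,3}$). Extend this to $G\circ H$ by, for each vertex $v$ of colour $i$, assigning the two colours of $\{1,2,3\}\setminus\{i\}$ to the parts of $H_v$; the per-colour count is then $n_G/3+(2n_G/3)(n_H/2)=n_G(n_H+1)/3$, a strong equitable $3$-coloring.

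The remaining task is $\chi_=(G\circ H)\leq 4$ when $G=K_{3,3}$ or $6\nmid n_G$. Whenever $G$ admits a strong equitable $4$-coloring (in particular when $4\mid n_G$, e.g.\ $G=K_4$), Corollary~\ref{lma_cor} applies directly. The main obstacle is the subcase where $G$ has no strong equitable $4$-coloring, e.g.\ $G=K_{3,3}$ with $n_G=6$ or cubic $G$ with $n_G\equiv 2\pmod 4$, since then a naive bipartition-shift construction skews the per-colour totals in $G\circ H$. The plan there is to start from an arbitrary equitable $4$-coloring of $G$ (guaranteed by Hajnal--Szemer\'edi, as $\Delta(G)=3$) and exploit the extra freedom available in each copy $H_v$: with three admissible colours but only two needed for the bipartition, one chooses \emph{which} pair of colours to use per copy, yielding a small integer balancing problem that can be solved by a global counting argument, or for the extremal instance $G=K_{3,3}$ by an explicit pattern on the $6$ copies.
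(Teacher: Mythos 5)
First, a point of reference: the paper does not prove Proposition \ref{3-4} at all; it is quoted from \cite{harder}. The fair comparison is therefore with the closely related arguments the paper does give, namely Theorem \ref{3col} (when $3$ colors suffice, for general $l$) and Theorem \ref{H_2leq4} (the bound $\chi_=(G\circ^l H)\le 4$). Your forced-structure lower bound is exactly the $l=1$ instance of the counting argument in the proof of Theorem \ref{3col}: the copy $H_v$ is forced onto the palette $\{1,2,3\}\setminus\{c(v)\}$, connectivity of $H$ makes its parts monochromatic, and the counts $a_i+(n_G-a_i)m$ with $m\ge 3$ force $a_1=a_2=a_3$. That part, together with the exclusion of $K_{3,3}$, is sound.

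Two sub-steps you defer are genuine obligations rather than routine checks. (i) For $G\in Q_2\setminus\{K_{3,3}\}$ with $6\mid n_G$ you need a strong equitable $3$-coloring of $G$. The version of the Chen--Lih--Wu result quoted in Section \ref{intro} covers only $\chi(G)=3$, so the bipartite case requires either the full $\Delta=3$ case of their theorem (every connected cubic graph other than $K_4$ and $K_{3,3}$ is equitably $3$-colorable) or the ``transversal independent set'' construction you only gesture at; as stated, this is the one place where your proof of the value $3$ is not yet a proof. (ii) For the bound $\le 4$ when $G$ has no strong equitable $4$-coloring, your plan of choosing which pair of colors each copy $H_v$ uses does work: one must arrange that each color appears on exactly $n_G/2$ parts, and a flow or Hall-type argument shows an integral assignment avoiding $c(v)$ in $H_v$ exists; but you supply neither the target multiplicities nor the feasibility argument. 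The paper's Theorem \ref{H_2leq4} takes a more elementary route here: for $G\in Q_2(x)$ it colors $G$ with four colors split across the bipartition and shifts palettes in the copies, and for $G\in Q_3$ with $n_G\equiv 2\pmod 4$ it peels off two vertices $v_1,v_2$ colored $1$ and $2$, uses that $4\mid(n_G-2)$ to color $(G-\{v_1,v_2\})\circ H$ strongly equitably via Lemma \ref{lma4}, and finishes $G[\{v_1,v_2\}]\circ H$ by an explicit pattern. Your balancing approach is more uniform but costs you an integrality lemma; the paper's is fully explicit. Close gaps (i) and (ii) and your argument is complete.
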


\begin{theorem}
If $G$ and $H$ are cubic graphs, then
$\chi_=(G \circ ^l H) =3$ if and only if $G$ has a strong equitable $3$-coloring and $H \in Q_2$.\label{3col}
\end{theorem}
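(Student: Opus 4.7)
My plan is to prove both directions by induction on $l$, and in fact to show the stronger sufficiency statement that $G\circ^l H$ admits a \emph{strong} equitable $3$-coloring.

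For sufficiency ($\Leftarrow$), the base case $l=1$ is a direct construction. Given the strong equitable $3$-coloring $V(G)=V_1\cup V_2\cup V_3$ with $|V_i|=n_G/3$, and writing $H=H(X_1,X_2)$ for the (unique) bipartition of $H\in Q_2$, I would color the copy of $H$ attached to every $i$-vertex by assigning $X_1$ and $X_2$ to the two colors of $\{1,2,3\}\setminus\{i\}$, one each. Since a connected bipartite cubic graph has $|X_1|=|X_2|=n_H/2$, a direct count shows that every color class of $G\circ H$ contains
\begin{equation*}
\frac{n_G}{3}+2\cdot\frac{n_G}{3}\cdot\frac{n_H}{2}=\frac{n_G(1+n_H)}{3}
\end{equation*}
vertices, so the resulting $3$-coloring is strong equitable. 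The inductive step is the same construction applied with $G$ replaced by $G\circ^{l-1}H$, which is strong equitably $3$-colorable by the induction hypothesis. Together with the lower bound $\chi_=(G\circ^l H)\geq 3$, valid because any level-$(l-1)$ vertex together with any edge of its attached copy of $H$ forms a triangle, this yields $\chi_=(G\circ^l H)=3$.

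For necessity ($\Rightarrow$), I would fix an equitable $3$-coloring of $G\circ^l H$. For any copy of $H$ in the structure, letting $w$ be the vertex to which it is attached, all $n_H$ vertices of the copy are adjacent to $w$, so they use only the two colors of $\{1,2,3\}\setminus\{c(w)\}$; thus $H$ is $2$-colorable and, being cubic (hence with edges), $H\in Q_2$. To establish that $G$ itself has a strong equitable $3$-coloring I induct again on $l$. For $l=1$, let $V_1,V_2,V_3$ be the intersections of the three color classes of $G\circ H$ with $V(G)$. Because $H\in Q_2$ is connected bipartite cubic, the bipartition of every attached copy is unique with parts of size $n_H/2$, so for each $v\in V_i$ the copy attached to $v$ contributes exactly $n_H/2$ vertices to each of the two colors different from $i$. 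Writing $C_c$ for the full color class of color $c$ in $G\circ H$, a direct count gives
\begin{equation*}
\bigl||C_a|-|C_b|\bigr|=\bigl||V_a|-|V_b|\bigr|\cdot\Bigl(\frac{n_H}{2}-1\Bigr)
\end{equation*}
for any two colors $a\neq b$. Since $H$ is cubic and bipartite, $n_H\geq 6$, so $n_H/2-1\geq 2$; the equitability bound $\bigl||C_a|-|C_b|\bigr|\leq 1$ then forces $|V_a|=|V_b|$, whence $|V_1|=|V_2|=|V_3|=n_G/3$. For $l\geq 2$ I would apply the $l=1$ argument to $(G\circ^{l-1}H)\circ H$ to conclude that $G\circ^{l-1}H$ has a strong equitable $3$-coloring, after which the induction hypothesis yields a strong equitable $3$-coloring of $G$.

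The main obstacle is the rigidity observation in the necessity direction: every attached copy of $H$ is compelled to split its $n_H$ vertices evenly between the two admissible colors, so any imbalance among $|V_1|,|V_2|,|V_3|$ is amplified at the $G\circ H$-level by the factor $n_H/2-1\geq 2$, leaving no slack and forcing the $G$-level coloring to be perfectly balanced.
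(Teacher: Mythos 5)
Your proof is correct, and for the forward ($l=1$) direction it is essentially the paper's argument: both rest on the rigidity observation that a connected bipartite cubic copy of $H$ attached to a $c$-colored vertex must split as $n_H/2$ plus $n_H/2$ over the two remaining colors, so that an imbalance $\bigl||V_a|-|V_b|\bigr|$ at the center level is scaled by $n_H/2-1\geq 2$ (the paper writes the class sizes as $n_i+(n_j+n_k)t$ with $t=n_H/2$, which is the same count). Where you genuinely diverge is in the treatment of $l\geq 2$ in the necessity direction: the paper sets up a three-term recursion for the class cardinalities $(n_1^l,n_2^l,n_3^l)$ across all levels and then asserts, via three unproved ``one can observe'' statements, that the gap between the extreme classes never shrinks as $l$ grows; you instead peel off only the outermost layer of copies, apply the $l=1$ rigidity once to conclude that the induced coloring of $G\circ^{l-1}H$ is strong equitable, and recurse downward. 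Your top-down induction is cleaner and more rigorous on this point, since it replaces the paper's monotonicity observations with a single exact identity, and it also makes explicit a fact the paper leaves implicit, namely that the extension construction and the rigidity argument do not require the center graph to be cubic (which matters because $G\circ^{l-1}H$ is not cubic). Your sufficiency direction likewise gives a self-contained construction where the paper cites its Proposition on $\chi_=(G\circ H)$ for the base case. The only stylistic gap is that you should state explicitly that the restriction of the equitable $3$-coloring to the center graph is itself a proper $3$-coloring, but that is immediate.
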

\begin{proof}
\noindent $(\Leftarrow)$ The truth for $l=1$ follows from Proposition \ref{3-4}. For $l >1$, since $n_G$
is divisible by 6 by the assumption, and $|V(G \circ ^ l H)|=n_G (n_H + 1)^l$, so the number of vertices in multicorona $G \circ ^l H$ is divisible by 6 for all $l \geq 1$. 
Hence, we get the thesis inductively.

\noindent $(\Rightarrow)$ Assume that $\chi_=(G \circ^l H) =3$. This implies:
\begin{itemize}
\item $H$ must be 2-chromatic,
\item $G$ must be 3-colorable (not necessarily equitably), i.e. $\chi(G) \leq 3$, which implies $G \in Q_2 \cup Q_3$.
\end{itemize}
Otherwise, we would have $\chi(G \circ^l H)  \geq 4$, which is a contradiction. 

We begin with $l=1$. Since $H\in Q_2$ is connected, its bipartition is determined. Let $H\in Q_2(t)$, $t\geq 3$. Observe that every 3-coloring of $G$ determines a 3-partition of 
$G \circ H$. Let us consider any 3-coloring of $G$ with color classes of cardinality $n_1, n_2$ and $n_3$, respectively, where $n_G=n_1+n_2+n_3$ and $n_1 \geq n_2 \geq n_3$. Then the cardinalities of 
color classes in the 3-coloring of $G \circ H$ form a sequence $(n_1^1,n_2^1,n_3^1)=(n_1+(n_2+n_3)t, n_2+ (n_1+n_3)t, n_3+(n_1+n_2)t)$. Such a 3-coloring of $G \circ H$ is equitable if and 
only if $n_1=n_2=n_3$. This means that $G$ must have a strong equitable 3-coloring.

For greater $l$ the cardinalities of color classes in the determined 3-coloring of $G \circ ^l H$, $(n_1^l, n_2^l, n_3^l)$, can be computed from the recursion:

$$
\left\{
\begin{array}{lcl}
n_1^0 &= &n_1,\\
n_2^0& = & n_2,\\
n_3^0 & = & n_3.
\end{array}\right.
$$
For $l \geq 1$:
$$
\left\{
\begin{array}{lcl}
n_1^l & = & n_1^{l-1} + (n_2^{l-1}+n_3^{l-1})t,\\
n_2^l & = & n_2^{l-1} + (n_1^{l-1}+n_3^{l-1})t,\\
n_3^l & = & n_3^{l-1} + (n_1^{l-1}+n_2^{l-1})t.
\end{array}\right.
$$

One can observe that in the determined 3-coloring of $G \circ ^l H$ the following statements are true:
\begin{itemize}
\item the color classes with the biggest difference between their cardinalities are classes of colors 1 and 3,
\item the order relation between cardinalities of color classes of colors 1 and 3 changes alternately, namely $n_1^l \geq n_3^l$ for odd $l$ while $n_1^l \leq n_3^l$ for even $l$,
\item the absolute value of the difference between cardinalities of color classes for colors 1 and 3 does not decrease as $l$ goes to infinity.
\end{itemize}

Due to the above, the 3-coloring of $G \circ^l H$ is equitable if only $n_1=n_2=n_3$, which completes the proof.
\end{proof}

\begin{theorem}
If $G$ is an arbitrary cubic graph and $H \in Q_2$, $l \geq 1$, then
$$\chi_=(G \circ^l H)\leq 4.$$
\label{H_2leq4}
\end{theorem}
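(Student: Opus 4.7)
The plan is to split on $n_G \bmod 4$ (noting $n_G$ is even since $G$ is cubic) and, in the non-trivial case, induct on $l$.

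When $4 \mid n_G$, the equitable $4$-coloring of $G$ guaranteed by~(\ref{prog}) is automatically strong equitable (all four classes have size $n_G/4$), so Lemma~\ref{lma4} directly yields a strong equitable $4$-coloring of $G \circ^l H$, proving $\chi_=(G \circ^l H) \leq 4$.

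The harder case is $n_G \equiv 2 \pmod 4$. Since $H \in Q_2$ is cubic bipartite, $n_H$ is even and $n_H+1$ is odd, so $N_l := n_G(n_H+1)^l \equiv 2 \pmod 4$ for every $l \geq 1$. Hence every equitable $4$-coloring of $G \circ^l H$ must have class sizes of the shape $q+1, q+1, q, q$ for some $q$. I prove the existence of such a coloring by induction on $l$; the base case $l=1$ is Proposition~\ref{3-4}. For the inductive step, given an equitable $4$-coloring of $G' := G \circ^{l-1} H$ with class sizes $a_1, a_2, a_3, a_4 \in \{q, q+1\}$, I extend it by, for each $v \in V(G')$, choosing a \emph{skipped} color $s(v) \in \{1,2,3,4\} \setminus \{c(v)\}$ and $2$-coloring the attached bipartite copy $H_v = H(X_1^v, X_2^v)$ with the remaining two colors, one per partition class.

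The scheme from Lemma~\ref{lma4} cannot be applied verbatim here, because it relies on all $a_c$ being equal, so a uniform cyclic shift no longer balances the class sizes. The main obstacle is choosing the skip colors so as to correct for the $\pm 1$ imbalance: a short count shows color $c$ ends up with $a_c + r_c \cdot n_H/2$ vertices, where $r_c = N_{l-1} - a_c - s_c$ and $s_c := |\{v : s(v) = c\}|$, and equitability forces $s_c = 2q + 1 - a_c$. Thus the two colors with $a_c = q+1$ must be skipped $q$ times and the two with $a_c = q$ must be skipped $q+1$ times.

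Constructing such an $s$ is a small transportation problem with supplies $a_c$ and demands $s_c$ subject only to $s(v) \neq c(v)$. Feasibility follows from the identities $\sum_c a_c = \sum_c s_c = N_{l-1}$ and the bounds $s_c \leq N_{l-1} - a_c$, and I would give one explicit assignment (e.g.\ send $q$ of the color-$1$ vertices to skip color $2$ and the remaining one to skip color $3$; symmetrically for color $2$; and route all color-$3$ and color-$4$ vertices to skip colors $4$ and $3$, respectively). Properness of the resulting $4$-coloring is automatic, since each $H_v$ is 2-colored according to its bipartition using colors distinct from $c(v)$ and the coloring of $G'$ is unchanged on $V(G')$.
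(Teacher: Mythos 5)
Your proof is correct, but it takes a genuinely different route from the paper's. The paper splits on the chromatic class of $G$: for $G\in Q_2(x)$ it uses one fixed extension scheme (copies attached to $1$- or $2$-vertices get colors $3,4$ and vice versa); for $G\in Q_3$ with $n_G\equiv 2\pmod 4$ it peels off one $1$-vertex $v_1$ and one $2$-vertex $v_2$, applies Lemma~\ref{lma4} to the strong equitably $4$-colorable remainder $(G-\{v_1,v_2\})\circ H$, and colors $G[\{v_1,v_2\}]\circ H$ by hand; the cases $4\mid n_G$ and $G=K_4$ go through Lemma~\ref{lma4} as in your easy case. Your skip-color bookkeeping subsumes all of this in one uniform inductive step: the count $a_c+r_c\cdot n_H/2$ with $r_c=N_{l-1}-a_c-s_c$ is right, the target $s_c=2q+1-a_c$ does make every class grow by exactly $(2q+1)n_H/2$, the feasibility conditions $\sum_c s_c=N_{l-1}$ and $s_c\le N_{l-1}-a_c$ are exactly the Hall-type conditions for the transportation problem with forbidden diagonal, and your explicit assignment meets the demands (reading ``symmetrically'' as: the leftover color-$2$ vertex skips color $4$). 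This buys uniformity — no case split on $\chi(G)$ — at the price of more arithmetic. One caveat: your base case is misplaced. Proposition~\ref{3-4} only gives $\chi_=(G\circ H)\le 4$; when $6\mid n_G$ and $G\ne K_{3,3}$ it produces an equitable \emph{3}-coloring, and (as the paper's $K_{3,3}$ example stresses) equitable $3$-colorability does not yield an equitable $4$-coloring, which your inductive step needs as input. The fix is immediate and already implicit in your argument: start the induction at $l=0$, since inequality~(\ref{prog}) gives $G$ itself an equitable $4$-coloring with profile $q+1,q+1,q,q$, to which your inductive step applies verbatim.
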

\begin{proof} Let $H \in Q_2(y)$.

\begin{description}
\item[\textnormal{\emph{Case} 1:}] $G \in Q_2(x)$ 

We start from an equitable 4-coloring of graph $G$ such that each of colors 1 and 3 is used $\lceil x/2\rceil$ times, while each of colors 2 and 4 is used $\lfloor x/2\rfloor$ times. Next, we color  each copy of graph $H$ in $G \circ H$ with two colors in the following way:
\begin{itemize}
\item copies linked to a 1-vertex or 2-vertex in $G$ are colored with 3 and 4,
\item copies linked to a 3-vertex or 4-vertex in $G$ are colored with 1 and 2.
\end{itemize}
In this way we get an equitable 4-coloring of $G \circ^1 H$ with cardinalities of color classes 1 and 3 equal to $\lceil x/2 \rceil +xy$, and cardinalities of color classes 2 and 4 equal 
to $\lfloor x/2 \rfloor +xy$.

Since $G \circ ^l H = (G \circ^{l-1} H) \circ H$, we can inductively extend the equitable 4-coloring of $G \circ^{l-1} H$ into an equitable 4-coloring of $G\circ ^l H$ by coloring each of 
the uncolored copies of $H$ in $G \circ^l H$ with two colors:
\begin{itemize}
\item copies linked to a 1-vertex or 2-vertex in $G\circ ^{l-1} H$ are colored with 3 and 4,
\item copies linked to a 3-vertex or 4-vertex in $G\circ ^{l-1} H$ are colored with 1 and 2.
\end{itemize}

\item[\textnormal{\emph{Case} 2:}] $G \in Q_3$ 

Since the number of vertices in graph $G$ is even, we have to consider two subcases:
\begin{description}
\item[\textnormal{\emph{Subcase} 2.1:}] $n_G=4s$ \emph{for some} $s\geq 2$.

The thesis follows from Lemma \ref{lma4}.

\item[\textnormal{\emph{Subcase} 2.2:}] $n_G=4s+2$ \emph{for some} $s\geq 2$.

We start from an equitable 4-coloring of $G$ - this is possible due to inequality (\ref{prog}). Without loss of generality, we may assume that in this coloring the sets of 1- and 
2-vertices contain one more vertex than the sets of 3- and 4-vertices. Let $v_1$ and $v_2$ be two vertices in $G$ with colors $1$ and $2$, respectively. 
It is easy to see that $(G-\{v_1, v_2\})\circ H$ has a strong equitable $4$-coloring. 

Now, we show that $G[\{v_1, v_2\}]\circ H$ is equitably $4$-colorable, where $G[\{v_1, v_2\}]$ is the subgraph of $G$ induced by vertex set $\{v_1,v_2\}$. For $i=1,2$, let $c(v_i)=i$ and let $H_i$
be a copy of $H$ linked to $v_i$. Furthermore, for $i=1,2$ let $X_i$ and $Y_i$ be the partition sets of $H_i$. Color the vertices of $X_1$ with color $2$, vertices of $Y_1$ with color $3$, vertices of $X_2$ with color $4$, and vertices of $Y_2$ with color $1$, respectively. One can easily check that this results in an equitable $4$-coloring of $G \circ H$. 

An equitable 4-coloring of $G \circ ^l H$, in this subcase, follows from induction on $l$.

\end{description}

\item[\textnormal{\emph{Case} 3:}] $G \in Q_4$ 

The thesis follows immediately from Lemma \ref{lma4}.
\end{description}
\end{proof}

\subsection{Case $H \in Q_3$}

In this subsection we obtain some polynomially solvable cases concerning optimal equitable coloring of multicoronas $G \circ^l H$, where $H \in Q_3$. In the remaining cases we give 
1-absolute approximation algorithms.
 
\begin{theorem}
Let $G$ be any cubic graph and let $H \in Q_3$. If $G$ has a strong equitable 4-coloring, then $$\chi_=(G \circ ^l H)=4$$ for any $l \geq 1$. \label{4_cub3}
\end{theorem}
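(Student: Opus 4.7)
The proof will establish matching lower and upper bounds on $\chi_=(G \circ^l H)$, and both are essentially immediate from material already in hand.

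For the lower bound, I would argue that $\chi(G \circ^l H) \geq 4$. Fix any vertex $v$ of the center graph $G$ and let $H_v$ be its attached copy of $H$ in the first corona layer. Since $v$ is adjacent to every vertex of $H_v$, the induced subgraph on $\{v\} \cup V(H_v)$ is the join $K_1 + H$, whose chromatic number equals $1 + \chi(H) = 1 + 3 = 4$ because $H \in Q_3$. This subgraph lives inside $G \circ^l H$ for every $l \geq 1$, so the ordinary chromatic number of $G \circ^l H$ is at least $4$, and hence $\chi_=(G \circ^l H) \geq \chi(G \circ^l H) \geq 4$.

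For the upper bound, I would invoke Lemma \ref{lma4} directly. The assumption of the theorem is that $G$ admits a strong equitable $4$-coloring, and we are told $H \in Q_3 \subset Q_2 \cup Q_3$, so the hypotheses of Lemma \ref{lma4} are met. That lemma produces a strong equitable $4$-coloring of $G \circ^l H$ for every $l \geq 1$, yielding $\chi_=(G \circ^l H) \leq 4$.

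Combining the two bounds gives $\chi_=(G \circ^l H) = 4$. There is no real obstacle here, since both directions reduce to previously established facts: the lower bound uses only the elementary observation about joining a vertex to $H$, and the upper bound is a direct application of Lemma \ref{lma4}. The proof can be written in just a few lines.
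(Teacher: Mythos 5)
Your proposal is correct and follows essentially the same route as the paper: the lower bound $\chi_=(G\circ^l H)\geq 4$ (which the paper simply declares ``clear'' and you justify via the join $K_1+H$ inside the corona) combined with the upper bound from Lemma \ref{lma4}. No gaps; your version merely makes the lower-bound step explicit.
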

\begin{proof}
It is clear that $\chi_=(G \circ^l H) \geq 4$, for $G$ and $H$ under assumption. On the other hand, $\chi_=(G \circ^l H) \leq 4$ by Lemma \ref{lma4}, and the thesis follows. 
\end{proof}

\begin{proposition}
If $G$ is a subgraph of cubic graph on $n_G \geq 4$ vertices, where $4 | n_G$ and $H \in Q_3$, then there is an equitable $5$-coloring of $G \circ^l H$. \label{n4_cub3}
\end{proposition}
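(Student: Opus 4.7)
My plan is to build a strong equitable $4$-coloring of $G\circ^l H$ by iterating the construction from Lemma~\ref{lma4}, and then carve out a fifth color class by a controlled recoloring at the outermost layer. Since $G$ is subcubic and $4\mid n_G$, inequality~(\ref{prog}) yields an equitable $4$-coloring of $G$, which by the divisibility hypothesis is in fact strong equitable: $V(G)=V_1\cup V_2\cup V_3\cup V_4$ with $|V_i|=m:=n_G/4$. The proof of Lemma~\ref{lma4} uses only this strong equitable $4$-partition of $G$, not its $3$-regularity, so iterating its construction $l$ times produces a strong equitable $4$-coloring of $G\circ^l H$ in which every color class has size $N/4$, where $N:=n_G(n_H+1)^l$; moreover, for every vertex $v$ of color $i$ in $G\circ^{l-1}H$, the partition class $X_j$ of the attached outer copy of $H$ is colored uniformly by $(i+j)\bmod 4$.

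Next I would introduce color $5$ by recoloring selected vertices of $X_1$ in the outermost copies only. The key structural observation is that $X_1$ of such a copy $H_v$ is independent in $G\circ^l H$, and its vertices are adjacent only within $H_v$ and to the center $v$, whose color $i$ is not $5$; hence any subset of $X_1$ may be recolored from $(i+1)\bmod 4$ to $5$ without destroying properness. I would parametrize the recoloring by four nonnegative integers $P_1,\ldots,P_4$ with $P_i\le M h_1$, where $M:=m(n_H+1)^{l-1}$ is the number of outer copies whose center has color $i$, and recolor $P_i$ vertices of $X_1$ in total across those copies.

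After this recoloring, the class of original color $c\in\{1,2,3,4\}$ has size $M(n_H+1)-P_{(c-1)\bmod 4}$, and the new color $5$ has size $\sum_{i=1}^4 P_i$. Setting $P_i=Q$ for $r$ of the four indices and $P_i=Q+1$ for the remaining $4-r$ makes the five resulting sizes depend only on $(Q,r)$, and a short residue-mod-$5$ case analysis yields, for every value of $M(n_H+1)\bmod 5$, integer choices $(r,Q)$ with $Q\approx M(n_H+1)/5$ making the five sizes pairwise differ by at most one. The feasibility $0\le Q$ and $Q+1\le Mh_1$ is automatic, since $h_1\ge n_H/3\ge(n_H+1)/5$ for $n_H\ge 2$ (the largest color class of a $3$-chromatic graph has size at least $n_H/3$).

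The main obstacle will be this residue-mod-$5$ verification: one must exhibit, for each of the five residues, an explicit $(r,Q)$ producing an equitable $5$-partition and check the corresponding feasibility bound. Because iterated Lemma~\ref{lma4} already provides strong equitability at every inner layer, no separate induction on $l$ is needed once the recoloring at the outermost layer is in place.
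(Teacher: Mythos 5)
Your proposal is correct and rests on the same core strategy as the paper's proof of Proposition~\ref{n4_cub3}: take the strong equitable $4$-coloring of $G\circ^l H$ produced by the construction of Lemma~\ref{lma4} (the paper invokes inequality~(\ref{prog}) and Corollary~\ref{lma_cor} here), and then create the fifth color class by recoloring, for each $i$, some vertices of the sets $X_1$ in outermost copies of $H$ attached to $i$-vertices; properness is preserved exactly for the reason you give. Where you differ is in the choice of the recoloring pool and the bookkeeping. The paper restricts the pool to the $n_G=4x$ outermost copies attached to the \emph{original} vertices of $G$, so it has only $x|X_1|$ candidates per color, and it verifies sufficiency by a three-way case analysis according to whether $H\in Q_3(t+1,t,t)$, $Q_3(t+1,t+1,t)$ or $Q_3(t,t,t)$, with the arithmetic written out for the $l=1$ cardinalities $x(n_H+1)$. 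You instead draw from \emph{all} outermost copies, giving a pool of $Mh_1$ vertices per color with $M=m(n_H+1)^{l-1}$, and you replace the case analysis on the type of $H$ by a single residue computation modulo $5$ on $A=M(n_H+1)$. This buys uniformity in $l$: the per-class deficit $N/4-N/5=M(n_H+1)/5$ grows like $(n_H+1)^l$, so a pool whose size does not depend on $l$ requires extra work (or an induction the paper leaves implicit) once $l\ge 2$, whereas your feasibility bound $\lceil A/5\rceil\le Mh_1$ follows from $h_1\ge\lceil n_H/3\rceil$ for every $l$. The only item you still owe is the explicit list of pairs $(r,Q)$ for the five residues of $A$ modulo $5$ (for instance $r=4$, $Q=A/5$ when $A\equiv 0$, and $r=\rho$, $Q=(A-5+\rho)/5$ when $4A\equiv\rho\not\equiv 0\pmod 5$); each case is a one-line check, so this is a routine completion rather than a gap.
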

\begin{proof}
Let $n_G = 4x$ for some integer $x \geq 1$. First, let us notice that there is a strong equitable 4-coloring of $G \circ^l H$ due to inequality (\ref{prog}) and Corollary \ref{lma_cor}. 
We color equitably $G \circ^l H$ with 
4 colors in the way described in the proof of 
Lemma \ref{lma4}. In such a coloring every color is used exactly $x(n_H+1)^l$ times. Now, we have to choose vertices in each of four color classes which should be recolored to 5 so 
that the resulting 5-coloring of $G \circ^l H$ is equitable. 
It turns out that we can choose a proper number of $i$-vertices, $i=1,2,3,$ and 4, that should be recolored to 5 from partitation sets $X_1$ of $H(X_1,X_2,X_3)$ linked to vertices of 
$G \circ ^{l-1} H$ during 
creating $l$-corona product $G \circ ^l H$ from $G \circ ^{l-1} H$. Moreover, we need only copies of $H$ from this $l$-th step that were linked to vertices of $G$. Since $n_G=4x$, 
we have exactly $x$ $i$-vertices in $G$, $i \leq 4$. We will see that we need at most $x|X_1|$ $i$-vertices that should be recolored to 5. We choose them from $X_1$'s linked 
to $(i-1)$-vertices in $G$, $i=1,2,3,4$ (we use color 4 instead of color 0). To prove this, let us consider three cases.
\begin{description}
\item[\textnormal{\emph{Case} 1:}] $H(X_1,X_2,X_3) \in Q_3(t+1,t,t)$ \emph{for some odd} $t \geq 3$.

In 4-coloring of $G \circ^l H$ each of four colors is used $x(3t+2)$ times, while in every equitable 5-coloring of the corona, each of five colors must be used $\lceil (12xt+8x)/5\rceil=
2xt+x+\lceil (2xt+3x)/5\rceil$ or $2xt+x+\lfloor (2xt+3x)/5\rfloor$ times. This means that the number of vertices that should be recolored to 5 in each of the four color classes is equal 
to at most $$3xt+2x-2xt-x-\lfloor (2xt+3x)/5\rfloor = x(t+1) - \lfloor (2xt+3x)/5\rfloor <x(t+1) =x|X_1|.$$

\item[\textnormal{\emph{Case} 2:}] $H(X_1,X_2,X_3) \in Q_3(t+1,t+1,t)$ \emph{for some even} $t \geq 2$.

In 4-coloring of $G \circ H$ each of four colors is used $x(3t+3)$ times, while in every equitable 5-coloring of the corona each of five colors must be used $\lceil (12xt+12x)/5\rceil=2xt+2x+
\lceil (2xt+2x)/5\rceil$ or $2xt+2x+\lfloor (2xt+2x)/5\rfloor$ times. This means that the number of vertices that should be recolored to 5 in each of the four color classes is equal to 
at most $$3xt+3x-2xt-2x-\lfloor (2xt+2x)/5\rfloor = x(t+1) - \lfloor (2xt+2x)/5\rfloor <x(t+1) =x|X_1|.$$

\item[\textnormal{\emph{Case} 3:}] $H(X_1,X_2,X_3) \in Q_3(t,t,t)$ \emph{for some even} $t \geq 2$.

In 4-coloring of $G \circ H$ each of four colors is used $x(3t+1)$ times, while in every equitable 5-coloring of the corona each of five colors must be used $\lceil (12xt+4x)/5\rceil=2xt+
\lceil (2xt+4x)/5\rceil$ or $2xt+\lfloor (2xt+4x)/5\rfloor$ times. This means that the number of vertices that should be recolored to 5 in each of the four color classes is equal to at 
most $$3xt+x-2xt-\lfloor (2xt+4x)/5\rfloor =x(t+1) - \lfloor (2xt+4x)/5\rfloor \leq xt =x|X_1|.$$
\end{description}

This completes the proof.
\end{proof}

\begin{theorem}
If $G$ is a cubic graph on $n_G \geq 8$ vertices and $H \in Q_3$, then $$\chi_=(G \circ ^l H) \leq 5.$$ \label{5_cub3}
\end{theorem}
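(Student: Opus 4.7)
The plan is to split the argument according to the residue of $n_G$ modulo $4$.

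If $4 \mid n_G$, then $G$ is a cubic (hence subcubic) graph with $4 \mid n_G$, so Proposition~\ref{n4_cub3} directly furnishes an equitable $5$-coloring of $G \circ^l H$.

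Otherwise $n_G$ is even and $n_G \geq 8$ but $4 \nmid n_G$, which forces $n_G = 4s + 2$ with $s \geq 2$. Here I would reduce to the previous case by deleting two non-adjacent vertices $v_1, v_2 \in V(G)$; such a pair exists because every vertex of a cubic graph on at least $10$ vertices has at least six non-neighbors. Set $G' = G - \{v_1, v_2\}$, a subcubic graph on $4s$ vertices. Proposition~\ref{n4_cub3} then yields an equitable $5$-coloring $c'$ of $G' \circ^l H$, which is naturally identified with the subgraph of $G \circ^l H$ induced by $V(G) \setminus \{v_1, v_2\}$ together with all of their corona-descendants. The remaining $2(n_H+1)^l$ vertices of $G \circ^l H$ fall into two vertex-disjoint subgraphs $T_{v_1}, T_{v_2}$, each isomorphic to $K_1 \circ^l H$, joined to $G' \circ^l H$ only through the $G$-edges incident with $v_1, v_2$.

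To complete the coloring, I would assign to each $v_i$ a color in $\{1,\ldots,5\}$ distinct from the colors assigned by $c'$ to its three $G$-neighbors (always possible since five colors are available), and then color the remaining vertices of each $T_{v_i}$ using a shift pattern analogous to the one in Lemma~\ref{lma4} rooted at $v_i$, followed by a recoloring of suitably many vertices in the $X_1$-partitions of the $l$-th-level copies of $H$ to color $5$. The two independent root-color choices together with the two independent shift choices provide the freedom needed to bring the combined color distribution on $T_{v_1} \cup T_{v_2}$ to within one of $2(n_H+1)^l/5$ per color, so that when added to $c'$ the overall coloring of $G \circ^l H$ is equitable.

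The main obstacle is this balancing step. Verifying it reduces to a counting analogous to the three subcases in the proof of Proposition~\ref{n4_cub3}, one for each partition type $Q_3(t+1,t,t)$, $Q_3(t+1,t+1,t)$, or $Q_3(t,t,t)$ of $H$: one must show that the number of $X_1$-vertices available for recoloring inside the two subtrees is always sufficient to absorb the discrepancy between the $4s(n_H+1)^l/5$ vertices of each color contributed by $c'$ and the $(4s+2)(n_H+1)^l/5$ vertices of each color required overall. The required computation is a simpler variant of the one already carried out in Proposition~\ref{n4_cub3}, since only two small $K_1 \circ^l H$-subtrees have to be balanced rather than a full corona over a cubic base.
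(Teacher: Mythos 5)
Your decomposition differs from the paper's. You split $V(G)$ by its residue mod $4$ and leave a remainder of two vertices, whereas the paper splits by the residue of $n_G$ mod $5$: it uses Theorem~\ref{semi} to produce a semi-equitable $5$-coloring of $G$ whose four equal color classes supply a set of $4m$ vertices ($m$ of each of colors $1,\dots,4$) to be handled by Proposition~\ref{n4_cub3}, while the complement is \emph{strong} equitably $5$-colored, so that Lemma~\ref{lma} turns its $l$-corona into a perfectly balanced piece that can be glued onto any equitable coloring without disturbing equitability. Your remainder of size $2$ enjoys neither property, and that is exactly where your argument has a gap.

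Two things are missing. First, the claim that balancing $T_{v_1}\cup T_{v_2}$ is ``a simpler variant'' of the computation in Proposition~\ref{n4_cub3} does not hold up: that computation starts from a \emph{strong} equitable $4$-coloring of the center graph, which $K_1$ does not possess. If you run the shift pattern of Lemma~\ref{lma4} on $K_1\circ^l H$ with $H\in Q_3(t,t,t)$, the difference $d^{(j)}$ between the root's color class and any other class satisfies $d^{(j)}=(1-t)\,d^{(j-1)}$ with $d^{(0)}=1$, so after $l$ levels the $4$-coloring of a subtree is unbalanced by as much as $(t-1)^l$; you would have to show that the supply of recolorable $X_1$-vertices of each color (whose distribution over the four colors is skewed by the same recursion) suffices to absorb both the systematic $\approx N/10$ per class and this growing deviation, for all three partition types of $H$ and all $l$. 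Nothing in the paper, and nothing in your sketch, establishes this. Second, even granting that each color class of $T_{v_1}\cup T_{v_2}$ can be brought to within one of $2(n_H+1)^l/5$, adding that to the classes of $c'$, which are themselves only within one of $4s(n_H+1)^l/5$, can yield classes differing by $2$: ``within one plus within one'' is not ``within one.'' You would need the subtree contribution to complement exactly the rounding pattern of $c'$ (i.e., to prescribe which classes receive the ceilings), and the two root-color and shift choices are not shown to give that much control. The paper's construction avoids both difficulties precisely because one of its two pieces is strongly equitable.
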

\begin{proof}
If $5 | n_G$, then $G$ has a strong equitable 5-coloring (due to inequality (\ref{prog})) and the thesis follows from Lemma \ref{lma} for $k=5$. We need to consider the cases where 
$n_G \bmod 5 \neq 0$. 

\begin{description}
\item[\textnormal{\emph{Case} 1:}] $n_G \bmod 5 = 1$ \emph{and} $n_G \geq 16$.

We start from a semi-equitable 5-coloring of cubic graph $G$ of type $[(n+4)/5,(n+4)/5,(n+4)/5,(n+4)/5,(n-16)/5]$ - this is possible due to Theorem \ref{semi} for $k=5$. Next, we choose 
four 1-vertices, four 2-vertices,
four 3-vertices, and four 4-vertices from the center graph $G$. They form a set $V^{16}(G)$. We consider the subgraph of $G$ induced by this vertex set - subcubic graph $G[V^{16}]$, and 
corona graph 
$G[V^{16}]\circ^l H$ being subgraph of $G \circ^l H$. $G[V^{16}]\circ^l H$ has an equitable 5-coloring due to Proposition \ref{n4_cub3}. Note, that this equitable 5-coloring of 
$l$-corona, described in the proof of
Proposition \ref{n4_cub3}, starts from a strong equitable 4-coloring of the center graph. Thus, there is possible to extend the strong 4-coloring of $G[V^{16}]$ into equitable 5-coloring of 
$G[V^{16}]\circ^l H$. Next, we consider subgraph $G[V\backslash V^{16}]$, strong equitably 5-colored, as a center graph of $l$-corona $G[V\backslash V^{16}] \circ^l H$. Due to Lemma \ref{lma} for $k=5$, 
$G[V\backslash V^{16}] \circ^l H$ has an equitable 5-coloring and this coloring is strong equitable. Furthermore, also this coloring is based on strong equitable 5-coloring of 
$G[V\backslash V^{16}]$. This means, that equitable 5 colorings of $G[V^{16}]\circ^l H$ and $G[V\backslash V^{16}] \circ^l H$ may be combined into one proper equitable 5-coloring 
of $G \circ ^l H$.

\item[\textnormal{\emph{Case} 2:}] $n_G \bmod 5 = 2$.

The idea is similar to that presented in the previous case. This time we start from a semi-equitable 5-coloring of $G$ of type $[(n+3)/5,(n+3)/5,(n+3)/5,(n+3)/5,(n-12)/5]$ - 
this is possible due to Theorem \ref{semi}.

Analogously, we choose three 1-vertices, three 2-vertices, three 3-vertices, and three 4-vertices from the graph $G$. They form a set $V^{12}(G)$. First we extend the coloring of $G$ into 
$G[V^{12}] \circ^l H$, and then into $G[V \backslash V^{12}] \circ^l H$. Finally, we obtain an equitable 5-coloring of $G \circ^l H$.

\item[\textnormal{\emph{Case} 3:}] $n_G \bmod 5 = 3$. 

This time we start from a semi-equitable 5-coloring of $G$ of type $[(n+2)/5,(n+2)/5,(n+2)/5,(n+2)/5,(n-8)/5]$ (possible due to Theorem \ref{semi}). Next, we choose, anoulogously 
to the previous case, 8 vertices of $G$, forming set $V^8(G)$. We extend the coloring of $G$ into 
$G[V^{8}] \circ^l H$, and then into $G[V \backslash V^{8}] \circ^l H$. Finally, we obtain an equitable 5-coloring of $G \circ^l H$.

\item[\textnormal{\emph{Case} 4:}] $n_G \bmod 5 = 4$.

In the last case we start from a semi-equitable coloring of $G$ of type $[(n+1)/5,(n+1)/5,(n+1)/5,(n+1)/5,(n-4)/5]$. We choose one vertex of each color $i$, $1 \leq i \leq 4$, from graph $G$. 
The vertices form the set $V^4$. First, we extend the coloring of $G[V^4]$ into an equitable 5-coloring of $G[V^4] \circ ^ l H$, in the way described in the proof of Proposition \ref{n4_cub3}. 
Next, we extend strong equitable 5-coloring of $G[V \backslash V^4]$ into strong equitable 5-coloring of $G[V \backslash V^4] \circ^l H$ (due to method described in the proof of Lemma 
\ref{lma}). Finally, we obtain an equitable 5-coloring of 
$G \circ ^l H$.
\end{description}
\end{proof}

\subsection{Case $H=K_4$}
\begin {proposition} [\cite{hf}]
If $G$ is a graph with $\chi\left(G\right)\leq m+1$, then $\chi_{=}(G \circ K_m)= m+1$. \label{complete}
\end{proposition}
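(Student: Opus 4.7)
The plan is to establish matching lower and upper bounds. For the lower bound, I would note that for each vertex $v$ of $G$, the subgraph induced by $v$ together with its linked copy of $K_m$ is a $K_{m+1}$, since $v$ is adjacent to all $m$ vertices of that copy and those vertices already form a clique. Hence $\chi(G \circ K_m) \geq m+1$, and therefore $\chi_=(G \circ K_m) \geq m+1$.

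For the upper bound, I would exhibit an equitable $(m+1)$-coloring, which in fact will turn out to be strong equitable. Start from any proper $(m+1)$-coloring $c \colon V(G) \to \{1, 2, \ldots, m+1\}$, which exists by the hypothesis $\chi(G) \leq m+1$. For every vertex $v \in V(G)$ with $c(v) = i$, assign to the $m$ vertices of the copy of $K_m$ linked to $v$ the colors in $\{1, 2, \ldots, m+1\} \setminus \{i\}$, one color per vertex in any bijective fashion. This is forced up to permutation because the vertex $v$ together with its copy of $K_m$ forms a $K_{m+1}$, and it clearly produces a proper $(m+1)$-coloring of $G \circ K_m$.

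It remains to count the color classes. Let $n_i$ denote the number of vertices of $G$ receiving color $i$, so $\sum_{i=1}^{m+1} n_i = n_G$. In the extended coloring, color $i$ is used on the $n_i$ vertices of $G$ themselves, and on exactly one vertex in each copy of $K_m$ whose linked vertex of $G$ is not colored $i$, contributing $n_G - n_i$ additional vertices. The total size of color class $i$ is therefore
\[
n_i + (n_G - n_i) = n_G,
\]
independent of $i$. Hence every color class has exactly $n_G$ vertices, giving a strong equitable $(m+1)$-coloring of $G \circ K_m$.

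There is no real obstacle here: the structural observation that each "star" $v + K_m$ is a $K_{m+1}$ both forces $m+1$ colors and automatically balances the classes via the complementary assignment. The only point to verify carefully is the counting step above, which confirms not merely equitability but perfect balance.
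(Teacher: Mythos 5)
Your proof is correct and complete: the lower bound from the $K_{m+1}$ formed by each vertex of $G$ with its linked copy of $K_m$, the extension of a proper $(m+1)$-coloring of $G$ by assigning the $m$ complementary colors within each copy, and the count showing every color class has size exactly $n_G$ all check out. The paper itself states this proposition only as a citation to an earlier work and gives no proof, but your argument is the standard one used there, so there is nothing to flag.
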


\begin{theorem}
If $G$ is cubic and $l \geq 1$, then $$\chi_=(G\circ^l K_4)=5.$$ \label{k4}
\end{theorem}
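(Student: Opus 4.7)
The plan is to establish the equality by matching lower and upper bounds, with the upper bound proved by induction on $l$.

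For the lower bound, I would observe that every vertex $v$ of $G \circ^{l-1} K_4$ together with its attached copy of $K_4$ induces a $K_5$ in $G \circ^l K_4$. Hence $\chi(G \circ^l K_4) \geq 5$, which gives $\chi_=(G \circ^l K_4) \geq 5$ immediately.

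For the upper bound, the base case $l=1$ follows from Proposition \ref{complete} with $m=4$, since any cubic $G$ satisfies $\chi(G) \leq 4 \leq m+1$. For the inductive step, I would use the following extension procedure: suppose $c$ is any proper $5$-coloring of $G \circ^{l-1} K_4$ (which exists by the inductive hypothesis), and let $n = |V(G \circ^{l-1} K_4)|$ with $a_i$ denoting the number of $i$-vertices so that $a_1+\cdots+a_5=n$. For each vertex $v$ of $G \circ^{l-1} K_4$ of color $c(v)=j$, the copy of $K_4$ attached to $v$ in $G \circ^l K_4$ must receive $4$ pairwise distinct colors from $\{1,\ldots,5\}$, all different from $j$; the only option is to use the $4$ colors in $\{1,\ldots,5\}\setminus\{j\}$, one per vertex of the copy. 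Assign these colors in this forced way. This yields a proper $5$-coloring of $G \circ^l K_4$.

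The key arithmetic is then that color $i$ appears in exactly those attached $K_4$-copies whose central vertex has color different from $i$, so its new total count equals
\[
a_i + (n - a_i) = n.
\]
Thus each of the five colors is used $n$ times, producing a strong equitable $5$-coloring of $G \circ^l K_4$ and completing the induction. I do not expect a genuine obstacle here; the only point that requires a moment's care is checking that the $K_4$-copies attached at level $l$ indeed have no coloring freedom and that this forced assignment automatically balances the classes — but the cardinality computation above makes this immediate.
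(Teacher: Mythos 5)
Your proposal is correct and follows essentially the same route as the paper: both argue by induction on $l$ via the recursion $G\circ^l K_4=(G\circ^{l-1}K_4)\circ K_4$, with the base case supplied by Proposition \ref{complete}. The only difference is that where the paper invokes Proposition \ref{complete} at every level (checking $\chi(G\circ^{l-1}K_4)\leq 5$), you reprove that instance directly -- the $K_5$ subgraphs give the lower bound, and your forced-extension count $a_i+(n-a_i)=n$ correctly shows the extension is automatically strong equitable -- which makes your version self-contained but not a genuinely different argument.
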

\begin{proof}
Since the following inequalities hold for every cubic graph $G$:
$$ 2 \leq \chi(G) \leq 4,$$ due to Brooks theorem \cite{brooks},
so cubic graph $G$ fulfills the assumption of Proposition \ref{complete} for $m=4$ and we have $\chi_=(G \circ K_4)=5$. As $G \circ ^l K_4 = (G \circ^{l-1} K_4) \circ K_4$ and $\chi_=(G \circ ^2 K_4)=\chi_=(G \circ ^3 K_4)=\cdots = \chi_=(G \circ ^{l-1} K_4)=5$, we get immediately the thesis.
\end{proof}

\section{Conclusion}\label{sum}

In the paper we have given some results concerning the equitable coloring of $l$-corona products $G \circ^l H$, where $G$ and $H$ are cubic graphs. The main of our results are summarized in 
Table \ref{tabela1}. In the table the entry '$3$ or $4$' means that we have identified all the cases for which $\chi_=(G \circ^l H)=3$ and/or $\chi_=(G \circ^l H)=4$. The entry '$\leq 5$' 
means
merely that $\chi_=(G \circ^l H) \leq 5$.

\begin{table}[htb]
\begin{center}
\begin{tabular}{|c|*{3}{c|}}\hline

\backslashbox[20mm]{$G$}{$H$} & $Q_2$ & $Q_3$ & $Q_4$\\ \hline
\multirow{2}{*}{$Q_2(t)$} & \multirow{2}{*}{3 or 4 \scriptsize{[Thm. \ref{3col}, \ref{H_2leq4}]}} & 4 for $t$ even \scriptsize{[Thm. \ref{4_cub3}]} &\multirow{2}{*}{5 \scriptsize{[Thm. \ref{k4}]}} \\ 
& & $\leq 5$ for $t$ odd \scriptsize{[Thm. \ref{5_cub3}]} & \\\hline
$Q_3$ & 3 or 4 \scriptsize{[Thm. \ref{3col}, \ref{H_2leq4}]} & $\leq 5^*$ \scriptsize{[Thm. \ref{5_cub3}]}  &5 \scriptsize{[Thm. \ref{k4}]} \\ \hline
$Q_4$ & 4 \scriptsize{[Thm. \ref{H_2leq4}]} & 4 \scriptsize{[Thm. \ref{4_cub3}]}& 5 \scriptsize{[Thm. \ref{k4}]} \\\hline
\end{tabular}

\vspace{0.5cm}
\small{*: we remind to the reader the case, where $n_G=6$. One should check that the bound holds also for such center graphs $G$ (there are only two cubic graphs on 6 vertices).}

\vspace{3mm}
%\hspace{0.5cm} $^1:$ we have determined all the cases when $\chi_==3$ or $\chi_==4$,\\
\caption{Possible values of $\chi_=(G \circ ^l H)$ for cubical multicoronas.}\label{tabela1}
\end{center}
\end{table}

Note that our results confirm the Equitable Coloring Conjecture for graphs under consideration. This conjecture was posed by Meyer \cite{meyer} in 1973.

What about the complexity of equitable coloring of cubical multicoronas? From \cite{harder} we know that this problem is NP-hard for coronas $G \circ^l H$, $l=1$. We remain as an open question 
whether this result can be extended to arbitrary cubical coronas $G \circ ^l H$, $l \geq 2$. 

We know that ordinary coloring of cubical multicoronas can be determined in polynomial time. The 
exact values
of ordinary chromatic number of $l$-corona products under consideration are given in Table \ref{tabela2}. The appropriate coloring of $G \circ H$ is obtained by coloring $G$ with 
$\chi(G)$ colors and extending this coloring into copies of $r$-partite cubic graph $H$ linked to $i$-vertex of $G$ by coloring $r$ partition sets with $(i+1) \bmod \chi(G\circ^1 H), 
\ldots, (i+r) \bmod \chi(G\circ ^1 H)$, respectively (we use color $\chi(G \circ^1 H)$ instead of color 0). Such a coloring can be extended into copies of $H$ for bigger $l$ in the 
similar way. 

\begin{table}[htb]
\begin{center}
\begin{tabular}{|c|*{3}{c|}}\hline
\backslashbox[20mm]{$G$}{$H$} & $Q_2$ & $Q_3$ & $Q_4$\\ \hline
$Q_2$ & 3 & 4 & 5 \\ \hline
$Q_3$ & 3 & 4 & 5 \\ \hline
$Q_4$ & 4 & 4 & 5 \\ \hline
\end{tabular}
\vspace{3mm}
\caption{The exact values of $\chi(G \circ ^l H)$ for cubical multicoronas.}\label{tabela2}
\end{center}
\end{table}

Simple comparison of Tables \ref{tabela1} and \ref{tabela2} leads us to the conclusion
that our results miss the exact values by at most one color.

\end{document}